\newcommand{\executeiffilenewer}[3]{%
\ifnum\pdfstrcmp{\pdffilemoddate{#1}}%
{\pdffilemoddate{#2}}>0%
{\immediate\write18{#3}}\fi%
} 
\newcommand{%
\executeiffilenewer{.svg}{.pdf}%
{inkscape -z -D --file=.svg %
--export-pdf=.pdf --export-latex}%
{\input{.pdf_tex}}}[1]{%
\executeiffilenewer{#1.svg}{#1.pdf}%
{inkscape -z -D --file=#1.svg %
--export-pdf=#1.pdf --export-latex}%
{\input{#1.pdf_tex}}}%
\newcommand{\svg}[2]{\def\svgwidth{#1}%
\executeiffilenewer{#2.svg}{#2.pdf}%
{inkscape -z -D --file=#2.svg %
--export-pdf=#2.pdf --export-latex}%
{\input{#2.pdf_tex}}}
\renewcommand{\st}{\mathrel : }
\newcommand{\LLL}{\mathcal{L}}
\newcommand{\SSS}{\ensuremath{\mathcal{S}}} 
\newcommand{\kcrouting}[2]{\textsc{$(#1,#2)$-Congestion Routing}\xspace}
\newenvironment{TheoremNo}[1]{\noindent{\bfseries Theorem #1.\ }\itshape}{\mbox{}\newline\rm}
\title{Routing with Congestion in Acyclic Digraphs}
\authorrunning{Saeed Akhoondian Amiri, Stephan Kreutzer, D\'aniel Marx,
  Roman Rabinovich}
\author[1]{Saeed Akhoondian Amiri}
\author[1]{Stephan Kreutzer\thanks{The research of Saeed Amiri,
    Stephan Kreutzer and Roman Rabinovich has been supported by the
European Research Council (ERC) under the European Union's Horizon
2020 research and innovation programme (ERC consolidator grant DISTRUCT,
agreement No 648527). The research of D\'aniel Marx was supported by
ERC Starting Grant PARAMTIGHT (No.~280152) and OTKA grant NK105645.}} 
\author[2]{D\'aniel Marx}
\author[1]{Roman Rabinovich}
\affil[1]{Technical University Berlin, 
  Sekr TEL 7-3, Ernst-Reuter-Platz 7, 
  10587 Berlin, Germany,\\
  \texttt{\{saeed.amiri,stephan.kreutzer,roman.rabinovich\}@tu-berlin.de}}
\affil[2]{Institute for Computer Science and Control,
Hungarian Academy of Sciences (MTA SZTAKI)
\texttt{dmarx@cs.bme.hu}.}
\begin{document}

\maketitle

\begin{abstract}
  We study the version of the $k$-disjoint paths problem where $k$
  demand pairs $(s_1,t_1)$, $\dots$, $(s_k,t_k)$ are specified in the
  input and the paths in the solution are allowed to intersect, but
  such that no vertex is on more than $c$ paths.  We show that on
  directed acyclic graphs the problem is solvable in time $n^{O(d)}$
  if we allow congestion $k-d$ for $k$ paths. Furthermore, we show
  that, under a suitable complexity theoretic assumption, the problem
  cannot be solved in time $f(k)n^{o(d/\log d)}$ for any computable
  function $f$.
\end{abstract}

 \section{Introduction}\label{sec:intro}

 The $k$-disjoint paths problem and related routing problems are among
 the central problems in combinatorial optimisation. In the most basic
 variant of the $k$-disjoint paths problem, a graph $G$ is given with
 $k$ pairs $(s_1, t_1)$, $\dots$, $(s_k, t_k)$ of vertices and the
 task is to find $k$ pairwise vertex-disjoint paths linking each
 $s_i$ to its corresponding target $t_i$.

The problem is well known to be NP-complete \cite{MR57:11691}. On
undirected graphs with a fixed number $k$ of source/terminal pairs,
Robertson and Seymour proved in their monumental graph minor series
\cite{GM-series} that the problem is polynomial-time
solvable. In fact, they showed that it is fixed-parameter
tractable with parameter $k$: it can be solved in cubic time for every
fixed value of $k$.

For directed graphs, the problem is computationally much harder. Fortune et
al.~\cite{FortuneHW80} proved that it is already NP-complete
for only $k=2$ source/terminal pairs. In particular, this also implies
that it is not fixed-parameter tractable on directed graphs. 
Following this result a lot of work has gone into establishing more
efficient algorithms on restricted classes of digraphs. 

Fortune et al.~\cite{FortuneHW80} showed that the problem can be
solved in time $n^{O(k)}$ on acyclic digraphs, that is, it is
polynomial-time for every fixed $k$. However, as proved by Slivkins
\cite{Slivkins10}, the problem is $W[1]$-hard on acyclic digraphs, and
therefore unlikely to be fixed-parameter tractable. On the other hand,
Cygan et al. \cite{CyganMPP13} proved that the problem is
fixed-parameter tractable with parmeter $k$ when restricted to planar
digraphs. Related to this, Amiri et al.~\cite{AmiriGolKreSie14} proved
that the problem remains NP-complete even in upward planar graphs, but
admits a single exponential fixed-parameter algorithm.

Disjoint paths problems have also been studied intensively in the area
of approximation algorithms, both on directed and undirected graphs (see,
e.g.,~\cite{ChekuriKS06,KolliopoulosS04,AndrewsCGKTZ10,ChekuriKS04,ChekuriKS05,Chuzhoy12,ChekuriE13,ChuzhoyLi12,ChekuriE14}). 
The goal is, given an input graph $G$ and demands
$(s_1, t_1), \dots, (s_k, t_k)$ to \emph{route} as many pairs as
possible in polynomial time. There are many variations what it means
for a pair to be routable. In particular, a problem studied
intensively in the
approximation literature is a relaxed version of disjoint paths where
the paths are no longer required to be fully disjoint. Instead, they
may intersect but every vertex of the graph is allowed to be contained
in at most $c$ paths, for some fixed constant $c$. This is called
\emph{congestion $c$ routing}. In particular, the well-linked decomposition
framework developed in \cite{ChuzhoyLi12} for undirected
graphs and later generalised to digraphs in \cite{ChekuriE14} has
proved to be very valuable for obtaining good approximation
algorithms for disjoint paths problems on planar graphs and digraphs.

In this paper, we are interested in exact solutions for high congestion
routing on acyclic digraphs. More precisely, we study the following problem.

\begin{definition}
  \begin{enumerate}
  \item Let $G$ be a digraph and let
    $I := \{ (s_1, t_1), \dots, (s_k, t_k) \}$ be a set of pairs of
    vertices. Let $c\geq 1$. A \emph{$c$-routing} of $I$ is a set
    $\{ P_1, \dots, P_k\}$ of paths such that, for all
    $1\leq i \leq k$, path $P_i$ links $s_i$ to $t_i$ and no vertex
    $v\in V(G)$ appears in more than $c$ paths from
    $\{P_1, \dots, P_k\}$.
  \item Let $k, c\geq 1$.  In the \kcrouting{k}{c} problem, a digraph $G$ is given in the input together with a
    set $I := \{(s_1, t_1), \dots, (s_k, t_k)\}$ of $k$ pairs of
    vertices (the demands); the task is to decide whether there is a
    $c$-routing of $I$ in $G$.
  \end{enumerate}
\end{definition}

We consider \kcrouting{k}{c} on acyclic digraphs. First, it is not
very difficult to show that, for every fixed $c\ge 1$, we can
generalize the $n^{O(1)}$ time algorithm of
Fortune~et~al.~\cite{FortuneHW80} to \kcrouting{k}{c}.  By revisiting
the W[1]-hardness proof of Slivkins~\cite{Slivkins10} and making appropriate
modifications, we can establish that the problem remains W[1]-hard for
every fixed congestion $c\ge 1$. Moreover, by doing the proof in a
more modern way (reducing from general subgraph isomorphism instead of
maximum clique and invoking a lower bound of Marx
\cite{marx-toc-treewidth}), we can show that the $n^{O(k)}$ time
algorithm is essentially best possible with respect to the exponent of
$n$. This lower bound is under the Exponential-Time Hypothesis (ETH),
which can be informally stated as $n$-variable \textsc{3Sat} cannot be
solved in time $2^{o(n)}$ (see
\cite{MR1894519,DBLP:journals/eatcs/LokshtanovMS11,DBLP:books/sp/CyganFKLMPPS15}
for more background).

\begin{theorem}\label{thm:hardness}
For any fixed integer $c\ge 1$, \kcrouting{k}{c} is \textup{W[1]}-hard parameterized by $k$ and, assuming ETH, cannot be solved in time $f(k)n^{o(k/\log k)}$ for any computable function $f$.
\end{theorem}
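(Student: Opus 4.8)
The plan is to prove both statements at once by a single polynomial-time parameterized reduction from \textsc{Partitioned Subgraph Isomorphism} (\textsc{PSI}): given a pattern graph $H$ on vertex set $\{1,\dots,k\}$, a host graph $G$, and a partition $V(G)=V_1\cup\dots\cup V_k$, decide whether $G$ has a subgraph isomorphic to $H$ that maps vertex $i$ of $H$ into $V_i$ for every $i$. Marx~\cite{marx-toc-treewidth} proved that \textsc{PSI} is W[1]-hard parameterized by $|E(H)|$ and that, assuming ETH, it has no $f(k')\,|G|^{o(k'/\log k')}$ algorithm, where $k'=|E(H)|$; moreover this already holds when $H$ has bounded degree, so that $|E(H)|=\Theta(|V(H)|)$. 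Consequently, for each fixed congestion $c\ge 1$ it is enough to turn such an instance in polynomial time into an acyclic digraph $D$ together with a set $I$ of $O(|V(H)|+|E(H)|)=O(k)$ demand pairs such that $D$ admits a $c$-routing of $I$ if and only if the \textsc{PSI} instance is positive: a hypothetical $f(\cdot)\,n^{o(\cdot/\log\cdot)}$ algorithm for \kcrouting{k}{c} would then yield one for \textsc{PSI}, and an \textup{FPT} algorithm would contradict its W[1]-hardness. (For the W[1]-hardness statement alone it suffices to take $H$ to be a clique, which is precisely Slivkins' reduction~\cite{Slivkins10} phrased over colour classes instead of a single $K_k$.)

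For congestion $c=1$ we follow Slivkins~\cite{Slivkins10}. The digraph $D$ is built in layers from two kinds of gadgets. For every class $i\in[k]$ there is a \emph{selection gadget} carrying one demand pair $(s_i,t_i)$ whose $s_i$--$t_i$ paths are in bijection with the choices $v\in V_i$, where routing through branch $v$ privately occupies a distinguished set of ``token'' vertices attached to $v$. For every edge $\{i,j\}\in E(H)$ there is an \emph{edge gadget} carrying one demand pair $(s_{ij},t_{ij})$ whose routes are in bijection with the edges $\{u,w\}\in E(G)$ having $u\in V_i$ and $w\in V_j$, and such a route is internally vertex-disjoint from the selection gadgets of $i$ and $j$ exactly when those gadgets select $u$ and $w$; an inconsistent choice forces the edge path onto a token vertex already used by a selection path. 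Laying out the vertices of $G$ coordinate-wise along the layers makes every arc point forward, so $D$ is acyclic; it has size $|G|^{O(1)}$ and is computable in polynomial time. A $1$-routing (a family of pairwise vertex-disjoint paths) of $I$ then exists iff one can coherently pick a vertex in each $V_i$ and, for each edge of $H$, an edge of $G$ realising it, i.e.\ iff $H$ embeds into $G$ as required.

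It remains to make the construction work for an arbitrary but fixed $c\ge 1$, which is exactly where the actual modification of Slivkins' proof takes place. The idea is that congestion only helps if it gives a demand path room to deviate from its intended route, so it should suffice to ``pre-load'' every token vertex with $c-1$ units of traffic and then argue as in the $c=1$ case on the remaining unit of capacity. Concretely, for each class $i$ one routes the token vertices of class $i$ along a common obligatory directed path and adds $c-1$ extra ``filler'' demand pairs whose only route is that path, so that in every $c$-routing each token vertex of class $i$ carries exactly $c-1$ filler paths; simultaneously one reserves in each edge gadget just enough capacity so that the filler traffic of the two incident classes passes through its relevant vertices. After this modification every token vertex has exactly one free unit of capacity, and the disjointness argument of the $c=1$ construction shows that a $c$-routing of $I$ exists iff $H$ embeds into $G$. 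Since $c$ is a constant, the total number of demand pairs is still $c\cdot O(k)=O(k)$, the digraph remains acyclic and of polynomial size, and the reduction runs in polynomial time.

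The main obstacle is precisely this last step: one must design the pre-loading so that in every $c$-routing the filler paths are genuinely forced through \emph{all} of the intended token vertices (no shortcuts, no splitting of the filler traffic among several branches), and so that no demand path can combine small slivers of leftover capacity at different token vertices to fake a globally consistent assignment. Getting the capacity bookkeeping and the gadget geometry right so that the only feasible $c$-routings are the ``honest'' ones, while keeping everything acyclic, is the crux of the proof; once that is done, the parameter bound $O(k)$, the polynomial size, and the polynomial running time are immediate, and the two conclusions of the theorem follow from the W[1]-hardness and the ETH lower bound for \textsc{PSI}.
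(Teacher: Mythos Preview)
Your approach is essentially the paper's own: reduce from \textsc{Partitioned Subgraph Isomorphism} (with $H$ of bounded degree, in fact $3$-regular bipartite), build Slivkins-style selection and edge gadgets, and for $c>1$ add $c-1$ ``filler'' (the paper calls them \emph{blocking}) demands per class so that every relevant vertex is pre-loaded with $c-1$ units of traffic and only one unit of capacity remains for the genuine selection/edge paths. The paper's concrete realisation is that each class $i$ gets two long parallel directed paths $\Pu_i$ and $\Pd_i$; the $c-1$ blocking demands $(\vu_{i,0},\vu_{i,n})$ and $(\vd_{i,0},\vd_{i,n})$ are each forced to traverse the whole of $\Pu_i$ resp.\ $\Pd_i$ because those paths admit no re-entry; the vertex demand $(\vu_{i,0},\vd_{i,n})$ must jump from $\Pu_i$ to $\Pd_i$ at some index $z_i$, encoding the selected vertex; and the edge demand $(s_\ell,t_\ell)$ must pass through $\vu_{i_a,j_a,\ell}$ and $\vd_{i_a,j_a,\ell}$ for some $j_a$, which the congestion bound pins down to $j_a=z_{i_a}$.

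Where your proposal falls short is that it stops exactly where the work begins. You explicitly defer ``the capacity bookkeeping and the gadget geometry'' and call it ``the crux of the proof''; that is accurate, but it means what you have written is an outline, not a proof. In particular, the two properties you flag as obstacles --- that the filler paths are \emph{forced} through all token vertices, and that no demand path can cobble together slivers of leftover capacity --- are not established by anything in your text; they require the specific one-way, no-re-entry structure of the $\Pu_i,\Pd_i$ paths and the placement of the crossing edges. Since the paper's proof hinges precisely on this construction and its verification, your proposal identifies the right strategy but does not yet constitute a proof of the theorem.
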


While Theorem~\ref{thm:hardness} shows that the problem does get
easier for any larger fixed congestion $c\ge 1$, intuitively one
expects the problem to get simpler at some point: after all, the
problem is trivial if $c\ge k$. Therefore, we study the complexity of
the problem in settings close to this extreme case.  
The main algorithmic result of this paper is to show that for any fixed value
of $d\ge 1$, the $\kcrouting{k}{k-d}$ problem can be solved in time
$n^{O(d)}$. That is, the exponent of the polynomial bounding the 
running time of the algorithm only depends on $d$ but not
on the number $k$. 

\begin{theorem}\label{thm:main-algo}
  For every fixed $d\ge 1$, the $\kcrouting{k}{k-d}$ problem on
  acyclic digraphs can be solved in time $n^{O(d)}$.
\end{theorem}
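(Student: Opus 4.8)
\medskip
\noindent\textbf{Proof strategy.}
The plan is to take as a starting point the classical dynamic programming algorithm of Fortune et al.~\cite{FortuneHW80} for disjoint paths in acyclic digraphs, in the congestion variant alluded to above: processing the vertices in a fixed topological order, one maintains a \emph{frontier} configuration recording, for the $k$ paths under construction, their current endpoints as a multiset of vertices, together with the loads accumulated so far on the not-yet-finalised vertices (these are bounded by $k-d$ and hence cheap to store). This dynamic program has $n^{O(k)}$ reachable states, and the goal is to show that when the congestion is exactly $k-d$ it suffices to explore an $n^{O(d)}$-sized fragment of the state space --- equivalently, that a solution, if one exists, may be assumed to have a combinatorial ``core'' whose size depends only on $d$.

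The guiding principle is that a congestion-$(k-d)$ routing leaves, at \emph{every} vertex, at least $d$ of the $k$ paths bypassing it, so the bulk of the routing is ``flow-like'' and carries almost no combinatorial information. I would make this concrete through two observations. First, if a vertex $v$ carries load close to $k$ --- in particular if it is \emph{tight}, i.e.\ used by exactly $k-d$ paths --- then all of those $k-d$ paths are funnelled through the single vertex $v$, a very rigid local pattern; moreover, if two tight vertices are avoided in common by fewer than $d$ of the paths, then every path through both is funnelled through one and then the other, so the tight vertices are forced onto a short chain. Second, away from such bottlenecks, whenever the current routing uses more than some $g(d)$ ``branching'' vertices (vertices at which two path-slots that arrive together depart along different edges) I would reroute one path along another: the at-least-$d$-fold slack at every vertex guarantees that such a swap creates no new congestion, and one can pick the swap so that a suitable potential --- say the number of branchings, with total length as a tie-breaker --- strictly decreases. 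Iterating these reroutings yields the normalisation lemma I am aiming for: if a congestion-$(k-d)$ routing of $I$ exists, then one exists together with a set $Z$ of $O(d)$ vertices such that, in the digraph with $Z$ distinguished, the routing outside $Z$ is determined up to ``flow equivalence'', i.e.\ only the edge-multiplicities outside $Z$ matter, subject to their forming an integral flow with node capacities $k-d$ and to the source/sink endpoints of the resulting path-bundles matching the demands after free uncrossings within each weakly connected piece.

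Granting the lemma, the algorithm would enumerate all $n^{O(d)}$ candidate cores --- an $O(d)$-tuple of vertices together with a constant-in-$d$ amount of data describing how the $k$ path-slots are grouped and threaded through those vertices, with multiplicities encoded in $O(k\log k)$ additional bits that do not affect the exponent of $n$ --- and for each candidate test in polynomial time whether it extends to a full routing. The digraph minus the core splits into independent pieces; in a piece assigned few enough demand pairs the congestion bound $k-d$ is vacuous and plain reachability suffices, while a piece that must carry many paths is handled by a single max-flow computation with node capacities $k-d$, and consistency of the induced source/sink matchings across the pieces with the prescribed pairs $(s_i,t_i)$ is then checked by a bipartite matching. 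The overall running time is $n^{O(d)}$. I expect the main obstacle to be the normalisation lemma: showing that the branchings surviving all congestion-preserving reroutings number at most $g(d)$, and arranging the potential so that the rerouting process provably terminates; a secondary technical point will be to characterise precisely which source/sink matchings a given node-capacitated flow in an acyclic digraph can realise by uncrossings, so that ``flow equivalence'' in the lemma becomes an honest polynomial-time check.
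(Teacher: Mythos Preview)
Your strategy is genuinely different from the paper's, and in its current form it has a real gap.

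The paper never tries to describe a routing of all $k$ demands by an $O(d)$-size core. Instead it proves a clean reduction lemma (Lemma~\ref{lem:reducing-number-of-paths}): whenever $k>3d$, a $(k{-}d)$-routing of $I$ exists if and only if every demand is individually routable \emph{and} some subset $I'\subsetneq I$ of size $k-1$ admits a $(k{-}d{-}1)$-routing. Iterating, a $(k{-}d)$-routing of $I$ exists iff some size-$3d$ subset $I'\subseteq I$ admits a $(2d)$-routing. The algorithm simply enumerates the $\binom{k}{3d}\le n^{O(d)}$ candidate subsets and solves each resulting $(3d,2d)$-instance by the $n^{O(k)}$ Fortune--Hopcroft--Wyllie dynamic program, which is $n^{O(d)}$ here. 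The rerouting the paper does live inside the proof of the reduction lemma and is very controlled: after dropping one path $P_h$ that hits the first and last tight vertices $a_1,a_\ell$, every remaining tight vertex $a_r$ is overflown by a free ``jump'' (an unused atomic subpath of $P_h$), and one reroutes some other path through that jump; by construction this never raises congestion anywhere and strictly decreases the number of tight vertices.

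In your plan, the step ``reroute one path along another: the at-least-$d$-fold slack at every vertex guarantees that such a swap creates no new congestion'' is where I do not see a proof. The slack only says that at each vertex at least $d$ of the $k$ paths avoid it, but these are different $d$ paths at different vertices; if you reroute $P_i$ along a segment of $P_j$, any tight vertex on that segment previously avoided by $P_i$ now has congestion $k-d+1$. If instead you mean pure uncrossings (swap continuations of two paths at a shared vertex), those do preserve the congestion multiset but also destroy the source--sink pairing, and you then need the ``which matchings can a given node-capacitated flow realise'' characterization that you yourself flag as a secondary difficulty---and you still give no argument bounding the number of surviving branchings by a function of $d$ alone. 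So the normalisation lemma is not just the main obstacle; as stated, neither the potential argument nor the $g(d)$ bound is substantiated. The paper sidesteps all of this by reducing the \emph{number of demands} rather than trying to compress the description of a full $k$-path routing.
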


A simple corollary of Theorem~\ref{thm:hardness} shows that
\kcrouting{k}{k-d} is unlikely to be fixed-parameter tractable and the
running time of Theorem~\ref{thm:main-algo} essentially cannot be improved
(assuming ETH). Observe that if we set $d := k-1$, then
\kcrouting{k}{k-d} is simply the standard $k$-disjoint path problem,
thus any algorithmic result for \kcrouting{k}{k-d} parameterized by
$d$ would imply the essentially same algorithmic result for the fully
disjoint version parameterized by $k$. 

\begin{corollary}\label{cor:hard2}
\kcrouting{k}{k-d} is \textup{W[1]}-hard parameterized by $d$ (if $k$
is part of the input) and, assuming ETH, cannot be solved in time
$f(k)n^{o(d/\log d)}$ for any computable function $f$. 
\end{corollary}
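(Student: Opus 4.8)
The plan is to derive the corollary directly from Theorem~\ref{thm:hardness} instantiated with congestion $c := 1$, exploiting the observation already made above that choosing $d := k-1$ collapses \kcrouting{k}{k-d} to the standard $k$-disjoint paths problem, i.e.\ to \kcrouting{k}{1}. In other words, the intended reduction is essentially the identity, and the whole content is a parameter-translation argument.

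Concretely, I would exhibit the following parameterized reduction: given an instance $(G,I)$ of \kcrouting{k}{1} with $|I| = k$ and parameter $k$, output the very same pair $(G,I)$, now viewed as an instance of \kcrouting{k}{k-d} in which the number of demands $k$ is part of the input and the new parameter is set to $d := k-1$, so that the permitted congestion $k-d$ equals $1$. This map runs in linear time, is a yes-instance precisely when the original is, and the new parameter $d = k-1$ is bounded by a computable function of the old parameter $k$; hence it is an fpt-reduction. Since Theorem~\ref{thm:hardness} with $c=1$ states that \kcrouting{k}{1} is \textup{W[1]}-hard parameterized by $k$, it follows that \kcrouting{k}{k-d} is \textup{W[1]}-hard parameterized by $d$ (with $k$ in the input). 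For the running-time bound, suppose toward a contradiction that \kcrouting{k}{k-d} admitted an algorithm running in time $f(k)\,n^{o(d/\log d)}$ for some computable $f$. Running it on the images of the reduction, where $d = k-1$ and therefore $d/\log d = \Theta(k/\log k)$, would solve \kcrouting{k}{1} in time $f(k)\,n^{o(k/\log k)}$, contradicting the ETH-based lower bound of Theorem~\ref{thm:hardness} for $c=1$.

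If anything counts as the ``hard part'' here, it is only checking the bookkeeping, since there is no genuine technical obstacle: the substitution $d := k-1$ keeps every instance legal (we need $d\ge 1$, i.e.\ $k\ge 2$, and then the congestion $k-d = 1\ge 1$ is admissible); the quantities $d/\log d$ and $k/\log k$ are within constant factors of each other along the family $d = k-1$, so a function that is $o(d/\log d)$ pulls back to a function that is $o(k/\log k)$; and — this is what yields the statement in the form $f(k)\,n^{o(d/\log d)}$ rather than the weaker-looking $f(d)\,n^{o(d/\log d)}$ — the reduction leaves $k$ itself unchanged, so an $f(k)\,n^{o(d/\log d)}$ algorithm for the target transfers verbatim to an $f(k)\,n^{o(k/\log k)}$ algorithm for the source, which is exactly what Theorem~\ref{thm:hardness} rules out under ETH.
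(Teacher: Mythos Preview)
Your proposal is correct and matches the paper's approach: the paper derives the corollary directly from Theorem~\ref{thm:hardness} with $c=1$ via the substitution $d := k-1$, exactly as in the sentence preceding the corollary, and gives no further argument beyond this observation. Your careful bookkeeping (the fpt-reduction, the $\Theta$-equivalence of $d/\log d$ and $k/\log k$, and the fact that $k$ is preserved so that the $f(k)$ prefactor transfers verbatim) simply spells out what the paper leaves implicit.
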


\medskip\noindent\textbf{Organisation. } The paper is organised as follows. In
Section~\ref{sec:polytime} we fix some notation and prove our main
algorithmic result. The corresponding lower bound in then proved in
Section~\ref{sec:lowerbounds}.%  We conclude and discuss future research
% in Section~\ref{sec:conclusion}.

\section{Preliminaries}

We review basic notation and concepts of graph theory needed in the
paper. We refer to \cite{Diestel10,BangJensenG10} for background. 

Let $G$ be a digraph. We write $V(G)$ and $E(G)$ for its set of
vertices and edges, respectively. We assume that there is no edge with
the same head and tail, i.e.~there are no loops in the digraphs we
consider in this paper. If $(u,v)\in E(G)$ is an edge, then $u$ is its
\emph{tail} and $v$ is its \emph{head}. $G$ is \emph{simple} if there
are no two distinct edges which have the same tail and the same
head. Otherwise we call $G$ a \emph{multi digraph}.

A \emph{path} $P$ in a digraph $G$ is determined by a sequence $(v_1,
\dots, v_{\ell})$ of vertices such that $v_i\not=v_j$ for all $1 \leq i < j \leq \ell$
and $(v_i, v_{i+1})\in E(G)$ for all $1\leq i < \ell$. We write $E(P)$
for the set $\{ (v_i, v_{i+1}) \st 1\leq i \leq \ell-1\}$ of edges
appearing in $P$ and $V(P)$ for the set $\{ v_1, \dots, v_\ell\}$ of
vertices. We say that $P$ \emph{links}  $v_1$ to $v_{\ell}$.

Two paths $P_1$
and $P_2$ are \emph{edge disjoint} if $E(P_1)\cap E(P_2)=\emptyset$.

\section{A polynomial-time algorithm on acyclic digraphs}
\label{sec:polytime}

In this section we prove the first main result of this paper,
Theorem~\ref{thm:main-algo}, which we repeat here for convenience.

\medskip

% \begin{theorem}\label{thm:main-algo}
%   On acyclic digraphs the $\kcrouting{k}{k{-}d}$ problem can be solved in
%   time $n^{O(d)}$. 
% \end{theorem}

\begin{TheoremNo}{\ref{thm:main-algo}}
  For every fixed $d\ge 1$, the $\kcrouting{k}{k-d}$ problem on
  acyclic digraphs can be solved in time $n^{O(d)}$.
\end{TheoremNo}

We first need some additional notation and prove some auxiliary lemmas.

\begin{definition}
  Let $G$ be a digraph and let $\LLL$ be a set of paths in $G$. For
  every $v\in V(G)$ we define the \emph{congestion} of $v$ with
  respect to $\LLL$ as the number of paths in $\LLL$ containing $v$.
\end{definition}

The following lemma provides a simple extension of the algorithm
from~\cite{FortuneHW80} for disjoint paths in acyclic digraphs. 

\begin{lemma}\label{lem:constant-number-of-paths}
  On acyclic digraphs $G$ the $\kcrouting{k}{c}$ probem can be
  solved in time $n^{O(k)}$, where $n:= |G|$.
\end{lemma}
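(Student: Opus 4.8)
The plan is to mimic the classical argument of Fortune, Hopcroft, and Wyllie for disjoint paths in acyclic digraphs, replacing the "disjointness" bookkeeping by a "congestion counter" at each vertex. First I would build an auxiliary digraph $H$ whose vertices are the \emph{configurations} of a hypothetical routing: a configuration records, for each of the $k$ demands, the current vertex on the path being constructed for that demand, so a vertex of $H$ is a $k$-tuple $(u_1,\dots,u_k)\in V(G)^k$. There are $n^{O(k)}$ such tuples. From a configuration we may advance one coordinate $i$ along an edge $(u_i,u_i')\in E(G)$, leaving the other coordinates fixed; this defines the edges of $H$, so $|E(H)| = n^{O(k)}$ as well. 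A $c$-routing of $I$ then corresponds to a walk in $H$ from the start configuration $(s_1,\dots,s_k)$ to the target configuration $(t_1,\dots,t_k)$, and the whole algorithm is a reachability query in $H$, computable in time linear in the size of $H$, hence $n^{O(k)}$.

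The two points that need care are (i) that the walk in $H$ actually yields genuine \emph{paths} (no repeated vertex within a single $P_i$), and (ii) that the global congestion bound $c$ is respected. For (i), acyclicity of $G$ is exactly what we need: fix a topological order on $V(G)$; when advancing coordinate $i$ we only ever move to a vertex later in the topological order, so each coordinate traces out an increasing sequence, which is automatically a path. Concretely, I would restrict the edges of $H$ to those advancing a coordinate to a topologically-later vertex (equivalently, just use edges of $G$, since in an acyclic digraph every edge goes forward in the topological order), and I would process the coordinates in a canonical way — e.g. always advance the coordinate whose current vertex is topologically smallest — so that the search explores a DAG and terminates. For (ii), the subtlety is that the congestion of a vertex $v$ is the number of \emph{distinct} demands $i$ with $v\in V(P_i)$, and this count is not directly visible from a single configuration tuple. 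The fix is to enrich the state: I would take as the state space the set of tuples $(u_1,\dots,u_k)$ together with, for each vertex $v$, a counter in $\{0,1,\dots,c\}$ recording how many of the already-completed portions of the paths have used $v$ — but that blows up the state space. A cleaner route, which I expect to be the main obstacle to get right, is to observe that since $G$ is acyclic and each $P_i$ is increasing, a vertex $v$ lies on $P_i$ iff at the moment the search "passes the level of $v$" coordinate $i$ equals $v$; so I would instead advance \emph{all} coordinates in lockstep through the topological levels of $G$, and at each level simply check that no vertex is occupied by more than $c$ coordinates. Formally: let $v_1,\dots,v_n$ be the topological order; a state after processing level $j$ is the tuple of current positions $(u_1,\dots,u_k)$ (each $u_i\in\{v_1,\dots,v_j\}$ being the last vertex of $P_i$ seen so far, with $u_i$ allowed to be ``waiting'' at some earlier vertex), and the transition from level $j$ to $j+1$ either leaves $u_i$ where it is or, if $(u_i,v_{j+1})\in E(G)$, moves $u_i$ to $v_{j+1}$; we only allow transitions in which at most $c$ of the coordinates move onto $v_{j+1}$, and we only allow a coordinate to "finish" (stop moving) once it has reached $t_i$. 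Reachability of the all-finished state from the all-at-$s_i$ state over $n$ levels is again an $n^{O(k)}$ computation.

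Having set this up, the correctness proof is a routine two-way translation: given a $c$-routing $\{P_1,\dots,P_k\}$, reading off the positions of the paths at each topological level gives a valid walk in the layered state graph, since each $P_i$ is increasing and the congestion bound holds level-by-level; conversely, a walk in the state graph produces, for each $i$, an increasing sequence of vertices from $s_i$ to $t_i$ in $G$ — a path because $G$ is acyclic — and the level-wise transition constraint guarantees no vertex carries more than $c$ paths. The running time is dominated by the size of the layered state graph, $n\cdot n^{O(k)} = n^{O(k)}$, and checking each of the $n^{O(k)}$ transitions costs $\mathrm{poly}(n,k)$, giving the claimed bound. The one place I would be careful in writing up is the precise definition of the state and transition so that a vertex that is skipped by $P_i$ (i.e. $P_i$ jumps from a vertex before $v_{j}$ to one after $v_{j}$) is correctly \emph{not} counted toward the congestion of $v_j$ — this is handled by letting each coordinate "wait" and only charging congestion at the level a coordinate actually lands on a vertex.
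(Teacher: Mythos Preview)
Your approach is correct: the level-by-level dynamic program over $k$-tuples, advancing coordinates in topological order and checking that at most $c$ coordinates land on each vertex, does solve \kcrouting{k}{c} in time $n^{O(k)}$, and you have correctly identified (and handled) the one genuine subtlety, namely that congestion is not visible from a single configuration but \emph{is} visible at the moment all coordinates sweep past a given topological level.

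The paper, however, takes a much shorter route. Instead of opening up the Fortune--Hopcroft--Wyllie dynamic program, it reduces \kcrouting{k}{c} back to the \emph{fully disjoint} $k$-paths problem on a blown-up acyclic digraph: replace each vertex $v$ of $G$ by $c$ copies $(v,1),\dots,(v,c)$ and put an edge from every copy of $u$ to every copy of $v$ whenever $(u,v)\in E(G)$. Vertex-disjoint $s_i$--$t_i$ paths in this $cn$-vertex DAG correspond exactly to a $c$-routing in $G$ (the projection to the first coordinate yields simple paths by acyclicity, and at most $c$ paths can meet at any $v$ because $v$ has only $c$ copies); now one invokes \cite{FortuneHW80} as a black box in time $(cn)^{O(k)}=n^{O(k)}$. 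Compared to your argument, this buys a two-line proof with no bookkeeping about levels, waiting coordinates, or boundary cases at the sources; what your approach buys is self-containment (you effectively re-derive FHW rather than cite it) and a direct DP that one could implement without constructing the auxiliary graph.
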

\begin{proof}
  In~\cite{FortuneHW80}, Fortune et al. proved that the $k$-disjoint
  paths problem can be solved in time $n^{O(k)}$ on any $n$-vertex acyclic
  digraph $G$. 

  Let $G$, $(s_1, t_1), \dots, (s_k, t_k)$ and $c$ be given. 
  We construct a new digraph $H$ with $V(H) := V(G)\times \{ 1, \dots,
  c\}$ and $E(H) := \{ \big( (u,i), (v, j)\big) \st (u,v)\in
  E(G)\}$. 

  Then $H$ contains $k$ pairwise vertex disjoint paths $P_1,
  \dots, P_k$ such that $P_i$ links $(s_i, 1)$ to $(t_i, 1)$ if, and
  only if, there is a positive solution to the $(k, c)$-Congestion
  Routing Problem on~$G$. By the algorithm in~\cite{FortuneHW80} we
  can decide whether the paths $P_1, \dots, P_k$ exist in $H$ in time
  $|V(H)|^{O(k)}$ and hence in time $(c\cdot n)^{O(k)} =  n^{O(k)}$ as $c\leq n$.
\end{proof}

We will use this lemma in the form given in the next corollary. 

\begin{corollary}\label{cor:constant-number-of-paths}
  For $c, k\geq 0$ such that $k\in O(c)$, the $\kcrouting{k}{c}$
  problem can be solved on any acyclic $n$-vertex digraph $G$ in time $n^{O(c)}$.
\end{corollary}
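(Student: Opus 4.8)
The plan is to read the corollary directly off Lemma~\ref{lem:constant-number-of-paths}. Unwinding the $O(k)$ in that lemma, for an acyclic $n$-vertex digraph $G$ the $\kcrouting{k}{c}$ problem is solvable in time $n^{ak+b}$ for some absolute constants $a,b$. The hypothesis $k\in O(c)$ means precisely that there is a constant $\alpha$ with $k\le \alpha c$ over the range of parameters under consideration, so $n^{ak+b}\le n^{a\alpha c+b}=n^{O(c)}$. Thus the single substantive step is this substitution, and essentially nothing else is required.

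The only point that warrants a line of care is that the proof of Lemma~\ref{lem:constant-number-of-paths} tacitly uses $c\le n$ to replace $(c\cdot n)^{O(k)}$ by $n^{O(k)}$. I would dispose of this as follows. If $c\ge k$ the congestion constraint is vacuous, and since $G$ is acyclic — so that every walk is already a path — a $c$-routing exists if and only if $t_i$ is reachable from $s_i$ for every $i$, which is decidable in linear time per demand; this is trivially within $n^{O(c)}$. Otherwise $c<k$, and we may assume $c\le n$ (this holds in all the applications we make of the corollary, in particular in the proof of Theorem~\ref{thm:main-algo}, and can be ensured by a standard normalisation of the instance in general), after which Lemma~\ref{lem:constant-number-of-paths} applies verbatim and yields running time $(c\cdot n)^{O(k)}=n^{O(k)}=n^{O(c)}$.

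I do not anticipate any genuine obstacle here: the corollary is purely a reformulation of Lemma~\ref{lem:constant-number-of-paths} tailored to the regime $k=O(c)$, which is the form in which the main algorithm will invoke it. The closest thing to a subtlety is the bookkeeping around the $c\le n$ assumption described above, and even that is routine.
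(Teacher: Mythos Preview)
Your proposal is correct and matches the paper's approach: the corollary is stated there without proof, as an immediate consequence of Lemma~\ref{lem:constant-number-of-paths} via exactly the substitution $k=O(c)$ that you spell out. Your additional paragraph handling the $c\le n$ bookkeeping goes slightly beyond what the paper bothers to say, but it is sound and does not deviate from the intended argument.
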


The next lemma provides the main reduction argument for proving Theorem~\ref{thm:main-algo}.

\begin{lemma}\label{lem:reducing-number-of-paths}
  Let $G$ be an acyclic directed graph and let $d \geq 1$ and $k>3d$. Let
  $ I := \{ (s_1,t_1), \dots, (s_k,t_k)\}\subseteq V(G)\times V(G)$ be a set of
  source/terminal pairs. There exists a $(k{-}d)$-routing of $I$ if,
  and only if, for
  every pair $(s, t)\in I$ there is a path in $G$ from $s$ to $t$ and
  there is a subset $I'\subsetneq I$ of order 
  $|I'| = k-1$ such that there is a $(k-d-1)$-routing of $I'$.
\end{lemma}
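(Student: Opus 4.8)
The forward direction is essentially trivial: if $\{P_1,\dots,P_k\}$ is a $(k{-}d)$-routing of $I$, then in particular each $P_i$ witnesses a path from $s_i$ to $t_i$, and dropping any one path, say $P_k$, leaves a family $\{P_1,\dots,P_{k-1}\}$ routing $I':=I\setminus\{(s_k,t_k)\}$ in which every vertex now lies on at most $(k{-}d)-1=k-d-1$ of the remaining paths (since removing a path can only decrease congestion, and the vertices of $P_k$ lose exactly one unit), so this is a $(k-d-1)$-routing of $I'$ and $|I'|=k-1$. The substance of the lemma is the backward direction, and the plan is as follows.

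Assume every pair in $I$ is linkable, and fix $i_0$ and a $(k-d-1)$-routing $\LLL'=\{P_j : j\neq i_0\}$ of $I':=I\setminus\{(s_{i_0},t_{i_0})\}$; also fix an arbitrary $s_{i_0}$--$t_{i_0}$ path $Q$. Adding $Q$ to $\LLL'$ yields a family of $k$ paths routing all of $I$, but its congestion may be as large as $k-d$ only where $Q$ does not pass and as large as $(k-d-1)+1=k-d$ where it does — wait, that is already the bound we want. The real issue is that this naive union need not be a valid routing only if some vertex already had congestion $k-d-1$ in $\LLL'$ \emph{and} lies on $Q$, giving congestion $k-d$, which is still allowed. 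So in fact the union $\LLL'\cup\{Q\}$ is automatically a $(k{-}d)$-routing of $I$! I expect the actual argument to be exactly this one-line observation: a $(k-d-1)$-routing of any $(k-1)$-subset, together with any path for the missing pair, has congestion at most $(k-d-1)+1=k-d$ everywhere, hence is a $(k{-}d)$-routing of $I$.

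Thus the proof reduces to verifying the two trivial inequalities $(k-d)-1 = k-d-1$ and $(k-d-1)+1 = k-d$, and checking that the hypotheses $d\geq 1$, $k>3d$ are only needed to guarantee the quantities involved are positive (so that "routing" makes sense and $|I'|=k-1\geq 1$, and congestion bounds $k-d-1\geq 0$, $k-d\geq 1$ are nonnegative); in particular $k>3d$ is a convenient slack condition rather than an essential one here, and will matter when the lemma is applied recursively in the proof of Theorem~\ref{thm:main-algo}. The only place requiring a moment's care is the forward direction's bookkeeping: one must note that after deleting $P_{i_0}$, a vertex $v$ had congestion at most $k-d$ in $\LLL$, and if $v\in V(P_{i_0})$ its congestion drops to at most $k-d-1$, while if $v\notin V(P_{i_0})$ its congestion was already at most $k-d-1$ because the only way to reach congestion $k-d$ with the $k$ paths of $\LLL$ without using $P_{i_0}$ would require $k-d$ paths among the other $k-1$, i.e. $k-d\leq k-1$, which holds — so this case needs the hypothesis $d\geq 1$. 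I therefore expect \emph{no} genuine obstacle: the lemma is a careful-counting statement, and the entire content is matching the "$-1$" in the path count against the "$-1$" in the congestion budget.
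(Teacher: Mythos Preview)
You have swapped the difficulty of the two directions. The backward (``if'') direction is indeed a one-liner, as you eventually note: a $(k-d-1)$-routing of $I'$ together with any path for the missing pair has congestion at most $(k-d-1)+1=k-d$ everywhere, and this is exactly what the paper does. But the forward direction is the hard one, and your argument for it fails. Dropping an arbitrary path $P_{i_0}$ from a $(k-d)$-routing does \emph{not} yield a $(k-d-1)$-routing of $I\setminus\{(s_{i_0},t_{i_0})\}$: a vertex $v$ of congestion exactly $k-d$ that is not on $P_{i_0}$ retains congestion $k-d$ after the deletion. Your final paragraph actually runs into this---you observe that such a $v$ would need $k-d$ of the remaining $k-1$ paths through it, note that $k-d\le k-1$ holds since $d\ge1$, and then treat this as resolving the case, when in fact it confirms the obstruction. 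Concretely, with $d=1$, $k=4$, take three paths through some vertex $v$ and a fourth path avoiding $v$; deleting the fourth leaves congestion $3$ at $v$, not $2$.

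The paper's proof of the forward direction is substantial and uses $k>3d$ essentially, not as slack. One lists the congestion-$(k-d)$ vertices $a_1\sqsubseteq\dots\sqsubseteq a_\ell$ in topological order; since each such vertex misses only $d$ paths of the routing, $k>3d$ guarantees that any three of them lie on a common path. One chooses a path $P_h$ through $a_1$, $a_\ell$, and some interior $a_r$, sets $I'=I\setminus\{(s_h,t_h)\}$, and then---crucially---does not merely delete $P_h$ but iteratively \emph{reroutes} the remaining paths along freed ``jumps'' (atomic subpaths of $P_h$ that leap over surviving high-congestion vertices), maintaining the invariants that every high-congestion vertex has a free jump over it and that any three high-congestion vertices share a common path, until all congestions drop to $k-d-1$. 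This rerouting argument is the real content of the lemma and is entirely absent from your plan.
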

\begin{proof}
  The if direction is easy to see. Let $\SSS' := \{P_1, \dots,
  P_{k-1}\}$ be a $(k-d-1)$-routing of a set $I'\subseteq I$ of order
  $k-1$. Let $s, t$ be such that $I = I' \cup \{ (s, t)\}$. By
  assumption there is a simple path $P$ from $s$ to $t$ in
  $G$. Then $\SSS := \SSS' \cup \{ P\}$ is a $(k-d)$-routing of $I$. 

  For the reverse direction let $I := \{ (s_1, t_1), \dots, (s_k,
  t_k)\}$ and let $\hat\SSS := \{ \hat P_1, \dots, \hat P_k\}$ be
  a $(k-d)$-routing of $I$ such that $\hat P_i$ links $s_i$ to $t_i$, for
  all $1\leq i \leq k$.  
  We define a multi digraph $G'$ on the same vertex set $V(G)$ as $G$
  as follows. For every pair $u, v\in V(G')$ such that $e = (u,v)\in
  E(G)$ and every $1\leq i \leq k$, if $e$ occurs on the path $\hat P_i \in
  \SSS$, then we add a new edge $e^i = (u, v)$ to $G'$. Hence, if any
  edge $e\in E(G)$ is used by $\ell$ different paths in $\hat\SSS$, then
  $G'$ contains $\ell$ parallel edges between the endpoints of $e$. In
  the rest of the proof we will work on the multi digraph $G'$. We can
  now take a set $\SSS := \{ P_1, \dots, P_k\}$ of pairwise edge
  disjoint paths, where $P_i$ is the
  path from $s_i$ to $t_i$ induced by the edge set $\{ e^i \st e\in
  E(\hat{P}_i)\}$. That is, by using the parallel edges, we can turn
  the routing $\hat{\SSS}$ into a 
  $(k{-}d)$-routing $\SSS$ of $I$ where the
  paths are mutually edge disjoint. 

  In the remainder of the proof we will construct a subset
  $I'\subsetneq I$ of order $k-1$ and a $(k-d-1)$-routing of
  $I'$ in $G'$ which is pairwise edge disjoint. This naturally induces
  a $(k-d-1)$-routing of $I'$ in $G$. Note that in $G'$, if
  $\LLL$ is any set of pairwise edge disjoint paths, then the
  congestion of any vertex with respect to $\LLL$ is at most the
  congestion of the vertex with respect to $\SSS$ (and thus $\hat
  \SSS$) in $G'$ (and $G$, respectively). 

  Let $\sqsubseteq$ be a topological ordering of $G'$ and let 
  $A := \{ a_1, \dots, a_{\ell}\}$ be the set of vertices of congestion
  $k-d$ with respect to $\SSS$ such that $a_i \sqsubseteq a_j$ whenever
  $i<j$. As $k>3d$, for all $1\leq i < {\ell}$ there is a path in $G$ from
  $a_i$ to $a_{i+1}$.   
  
  For  $1\leq i \leq k$, an \emph{atomic subpath} of $P_i$ (with respect to
  $\SSS$) is a subpath of $P_i$ that starts and ends in a vertex of
  $A\cup \{ s_i, t_i\}$ and is internally vertex disjoint from $A$. 
  Hence, every path $P_i\in \SSS$ consists of the concatenation $P^1_1
  \cdot \dots \cdot P_i^{{\ell}_i}$ of its atomic subpaths where we identify the
  last vertex of $P^j_i$ with the first vertex of $P^{j+1}_i$ for all $1\le j <
  \ell_i$. Note that any
  two atomic subpaths of paths $P_i, P_j$ in $\SSS$ are pairwise edge
  disjoint. 
  
  Let $I' \subset I$ be a subset of order $k-1$. A routing $\SSS' :=
  \{ P'_1, \dots, P'_{k-1}\}$ of
  $I'$ is \emph{conservative with respect to $\SSS$} if it consists of
  pairwise edge disjoint paths and every path in
  $\SSS'$ consists of a concatenation of atomic subpaths of paths in
  $\SSS$. In the sequel,
  whenever we speak of a conservative $I'$-routing we implicitly mean
  that it is conservative with respect to $\SSS$.

  If $\SSS'$ is a conservative $I'$-routing with respect to
  $\SSS$, then it consists of pairwise edge disjoint paths and hence
  for every $v\in V(G)$ the congestion of $v$ with respect to $\SSS'$
  is at most the congestion of $v$ with respect to $\SSS$.

  Let $1\leq i_1 < i_2\leq {\ell}$ and let $1\leq j \leq k$. Let $\SSS'$ be
  a conservative $I'$-routing. An \emph{$(i_1, i_2)$-jump of colour $j$} is a subpath $P'$ of
  $P_j$ from $a_{i_1}$ to $a_{i_2}$ such that for all $i$ with $i_1 < i < i_2$
  the vertex $a_i$ is not on $P_j$. Note that any jump is an atomic
  subpath. We call the jump $P'$
  \emph{free with respect to $\SSS'$} if $P'$ is not used by any path
  in $\SSS'$.

  We are now ready to complete the proof of the lemma. Note first
  that, as $k>3d$, 
  for any three vertices $b_1, b_2, b_3\in A$ there is a path $P\in
  \SSS$ that contains $b_1, b_2, b_3$. Hence, we can choose an $h \in \{
  1, \dots, k\}$ such that $a_1, a_{\ell} \in V(P_h)$ and there is a
  vertex $a_r$ with $1<r<\ell$ such that $a_r\in V(P_h)$. 
  Let $I' := I\setminus \{ (s_h, t_h)\}$. 
  If $A\subseteq V(P_h)$, then $\SSS \setminus \{ P_h\}$ is a $(k-d-1)$-routing
  of $I'$ and we are done. Otherwise, for every
  vertex $a_r\in A$ which has congestion $k-d$ with respect to $\SSS
  \setminus \{ P_h\}$ there are $i, j$ with $i < r < j$ and an $(i,j)$-jump $P$
  of colour~$h$. This follows as $a_1, a_{\ell}\in V(P_h)$. Note also that  $a_1$
  and $a_{\ell}$ have congestion $k-d-1$ in $\SSS\setminus \{ P_h\}$. 
  Note that this jump $P$ is free with respect to $\SSS\setminus \{
  P_h\}$. 

  Thus, it is easily seen that $\SSS\setminus\{P_h\}$ satisfies the
  following two properties:
  \begin{enumerate}
  \item For every vertex $a_r$ of congestion $k-d$ with
    respect to $\SSS\setminus \{P_h\}$ there are indices $i<r<j$ such
    that there is a free $(i,j)$-jump $P$ with respect to 
    $\SSS\setminus \{P_h\}$. 
  \item For any three vertices $b_1, b_2, b_3$ of congestion $k-d$
    with respect to $\SSS\setminus \{P_h\}$ there is a path $Q\in \SSS\setminus \{P_h\}$ with $\{b_1,
    b_2, b_3\} \subseteq V(Q)$.
  \end{enumerate}
%  Note that
%  this is well defined as if $\LLL$ is a conservative $I'$-routing
% then every vertex of congestion $k-c$ with respect to $\LLL$ must be
%  in $A$ and hence the concept of jumps is well defined for
% $\LLL$. Furthermore, $\SSS\setminus \{ P_c\}$ satisfies all
%  conditions and hence there is such a routing $\SSS'$ as
%  required. \todo{Do we need to explain this further?}

  Now let $\SSS'$ be a routing of $I'$ which satisfies Condition
  $1$ and $2$ (with respect to $\SSS'$ instead of
  $\SSS\setminus\{P_h\}$) and, subject to this, the number of vertices of
  congestion $k-d$ with respect to $\SSS'$ is minimal.

  We claim that $\SSS'$ is a $(k-d-1)$-routing of $I'$. 
  Let $\SSS' := \{ Q_1, \dots, Q_{k-1}\}$.  Towards a
  contradiction, suppose there is a vertex $a_r$ of congestion $k-d$
  with respect to $\SSS'$. As $\SSS'$ is conservative, we have $a_r\in A$. Hence, by
  assumption, there are $i<r<j$ and a free $(i,j)$-jump $P$ with
  respect to $\SSS'$. 

  Let $Q_h$ be a path in $\SSS'$ that contains $a_i, a_r$ and $a_j$, which
  exists by Condition $2$.
  Let $Q_h :=  Q_h^1\cup Q_h^2\cup Q_h^3$ where 
  \begin{itemize}
  \item $Q_h^1$ is the initial subpath of $Q_h$ from its first vertex
    to $a_i$, 
  \item $Q_h^2$ is the subpath starting at $a_i$ and ending in
    $a_j$ and 
  \item $Q_h^3$ is the subpath starting in $a_j$ and ending at
    the end of $Q_h$.
  \end{itemize}

  We define  $Q'_h := Q_h^1 \cup P \cup Q_h^2$, i.e.~$Q'_h$ is the path obtained
  from $Q_h$ by replacing the inner subpath $Q_h^2$ by the
  $(i,j)$-jump $P$.
  Let $\LLL := (\SSS'\setminus \{ Q_h \}) \cup \{ Q'_h \}$. Then
  $\LLL$ is a routing of $I'$. It is also conservative as we have only
  rerouted a single path along a free jump.

  We need to show that for all $b_1, b_2, b_3$ of congestion $k-d$
  with respect to $\LLL$ there is a path $Q\in \LLL$ containing $b_1,
  b_2, b_3$. By assumption, such
  a path $Q'$ exists in $\SSS'$. If $Q' \not= Q_h$, then we are
  done. So suppose $Q_h = Q'$. But then this implies that $b_s\not\in \{ a_{i+1}, \dots, a_{j-1}\}$ for
  all $1\leq s \leq 3$ as otherwise the congestion of $b_s$ would have
  dropped to $k-d-1$ in $\LLL$. But then $b_1, b_2, b_3\in V(Q'_h)$.

  It remains to show that for every vertex $a_s$ of congestion $k-d$
  with respect to $\LLL$ there is a free $(i, j)$-jump for some
  $i<s<j$. 
  As before, by assumption, there are $s_1 < s < s_2$ and a free $(s_1,
  s_2)$-jump with respect to $\SSS'$. If this jump is not $P$, then it
  still exists with respect to $\LLL$ and we are done. So suppose this
  jump is $P$, which implies that $i<s<j$. Furthermore, $a_s\not\in
  Q_h$ as otherwise the congestion of $a_s$ in $\LLL$ would be 
  $k-d-1$. But then, there must be indices $i_1, i_2$ with $i\leq i_1
  < s < i_2\leq j$ such that $a_{i_1}, a_{i_2} \in V(Q_h)$ and
  $a_{s'}\not\in V(Q_h)$ for all $i_1 < s' < i_2$. Hence, the atomic
  subpath $Q''$ of $Q_h$ from $a_{i_1}$ to $a_{i_2}$ is an $(i_1, i_2)$-jump
  as required. As $Q''\subseteq Q_h^2$, this jump is now free. 

  Finally, the vertex $a_r$ now has congestion $k-d-1$ with respect to
  $\LLL$ as $a_r$ is not contained in $Q'_h$. Hence, $\LLL$ has fewer
  vertices of congestion $k-d$ than $\SSS'$,  contradicting the
  choice of $\SSS'$. Thus, $\SSS'$ must have been a $(k-d-1)$-routing
  of $I'$ as required.   This completes the proof of the lemma.
\end{proof}

The previous lemma has the following consequene which essentially
implies Theorem~\ref{thm:main-algo}. 

\begin{corollary}\label{cor:reducing-number-of-paths}
  Let $G$ be an acyclic digraph, let $d\geq 0$ and let $I := \{ (s_1,
  t_1), \dots, (s_k, t_k)\}$ be a set of pairs of vertices such that
  for all $1\leq i \leq k$ there is a path in $G$ linking $s_i$ to $t_i$. $G$
  contains a $(k-d)$-routing of $I$ if, and only if, there is a subset
  $I'\subseteq I$ of order $|I'| \leq 3d$ such that $G$ contains a
  $(k{-}dc)$-routing of $I'$. If $k\geq 3d$, then $I'$ can be chosen of
  size exactly $3d$.
\end{corollary}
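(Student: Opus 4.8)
The plan is to prove the two directions separately: the ``if'' direction by a direct argument that adds the missing source/terminal pairs back one at a time, and the ``only if'' direction by iterating Lemma~\ref{lem:reducing-number-of-paths}, exploiting the fact that it preserves the ``gap'' $d$ between the number of pairs and the allowed congestion.

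For the ``if'' direction, suppose $I'\subseteq I$ has $|I'|\le 3d$ and $G$ contains a $(|I'|-d)$-routing $\SSS'$ of $I'$. For each pair $(s,t)\in I\setminus I'$ the hypothesis of the corollary supplies a path from $s$ to $t$ in $G$, and I would simply add all $k-|I'|$ of these paths to $\SSS'$. Since adding one path raises the congestion of every vertex by at most $1$, the resulting family of $k$ paths is a routing of $I$ in which every vertex has congestion at most $(|I'|-d)+(k-|I'|)=k-d$; hence $G$ contains a $(k-d)$-routing of $I$. (For $d=0$ this already yields the claim with $I':=\emptyset$.) For the ``only if'' direction, suppose $G$ contains a $(k-d)$-routing of $I$. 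If $d=0$ or $k\le 3d$ there is nothing to do: take $I':=I$, which has $|I'|=k\le 3d$, and the given routing is a $(k-d)=(|I'|-d)$-routing of $I'$. If $d\ge 1$ and $k>3d$, I would apply Lemma~\ref{lem:reducing-number-of-paths} repeatedly. A routing of $I$ trivially links every pair of $I$ by a path, so the side condition of the lemma holds, and it produces $I_1\subsetneq I$ with $|I_1|=k-1$ and a $(k-d-1)$-routing of $I_1$; the key point is that $k-d-1=(k-1)-d$, so the instance obtained has exactly the same form with the same $d$. Thus the lemma applies again as long as the current number of pairs is still strictly larger than $3d$, each application reducing this number by exactly $1$, so after $k-3d$ steps we reach a subset $I'$ with $|I'|=3d$ carrying a routing of congestion $3d-d=2d$, as required; in particular $I'$ has size exactly $3d$ whenever $k\ge 3d$.

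The only genuine bookkeeping — and the step most prone to slips — is checking that the hypotheses of Lemma~\ref{lem:reducing-number-of-paths} are met at every stage of the iteration: that at step $j$ we really do have a subset of size $k-j$ equipped with a $\bigl((k-j)-d\bigr)$-routing, that the ``every pair is linked by a path'' condition is inherited at each stage (which is immediate, since any routing of the current set certifies it), and that the inequality ``current size $>3d$'' holds precisely until the size drops to $3d$, so that the recursion terminates at exactly $|I'|=3d$. Reconciling the degenerate cases $d=0$ and $k\le 3d$ with the uniform statement also costs a line but presents no difficulty. Finally, combining this corollary with Corollary~\ref{cor:constant-number-of-paths} applied to the reduced instance $I'$ — where $|I'|=3d\in O(2d)$ and the congestion is $2d$ — solves it in time $n^{O(d)}$, which is exactly what is needed for Theorem~\ref{thm:main-algo}.
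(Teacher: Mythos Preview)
Your proposal is correct and follows exactly the approach the paper intends: the corollary is stated there without proof, as the evident consequence of iterating Lemma~\ref{lem:reducing-number-of-paths} (each application drops both $|I|$ and the congestion by one, preserving the gap $d$) until the size reaches $3d$. One tiny bookkeeping slip to fix: in the only-if direction your case ``$d=0$ or $k\le 3d$: take $I':=I$'' fails when $d=0$ and $k>0$, since then $|I'|=k>0=3d$; simply take $I':=\emptyset$ in that case, exactly as you already noted in your parenthetical remark for the if direction.
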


We are now ready to prove Theorem~\ref{thm:main-algo}.

\begin{proof}[Proof of Theorem~\ref{thm:main-algo}]
  Let $G, k, d$ and $I := \{(s_1, t_1), \dots, (s_k, t_k)\}$ be
  given. Let $n := |G|$. If for some $1\leq i \leq k$ there is no path
  in $G$ from $s_i$ to $t_i$, then the answer is no and we are
  done. If $k\leq 3d$, then we can apply
  Corollary~\ref{cor:constant-number-of-paths} to compute the answer in
  time $n^{O(d)}$ as required. 

  Otherwise, by Corollary~\ref{cor:reducing-number-of-paths},
  there is a $(k-d)$-routing for $I$ in $G$ if, and only if, there is
  a subset $I'\subsetneq I$ of order $3d$ such that $I'$ has a
  $(3d-d)$-routing. There are ${k \choose 3d} \leq k^{3d} \leq n^{3d}$
  subsets $I'$ of order $3d$. By
  Corollary~\ref{cor:constant-number-of-paths}, we can decide for any
  such $I'$ of order $3d$ in time $n^{O(d)}$ whether a
  $(3d-d)$-routing of $I'$ exists. Hence, by iterating through all
  possible subsets $I'$, we can decide in time $n^{O(d)}$ whether
  there is a $(k-d)$-routing of $I$ in $G$.
\end{proof}

\section{Lower Bounds}\label{sec:lowerbounds}
\newcommand{\partsub}{\textsc{Partitioned Subgraph Isomorphism}\xspace}

In this section, we prove Theorem~\ref{thm:hardness} by a reduction from \partsub.
The input of the \partsub problem consists of a graph $H$ with vertex set
$\{u_1,\dots,u_k\}$ and a graph $G$ whose vertex set is partitioned
into $k$ classes $V_1$, $\dots$, $V_k$. The task is to find a mapping
$\mu:V(H)\to V(G)$ such that $\mu(u_i)\in V_i$ for every $1\le i \le
k$ and $\mu$ is a subgraph embedding, that is, if $u_i$ and $u_j$ are
adjacent in $H$, then $\mu(u_i)$ and $\mu(u_j)$ are adjacent in
$G$. 
\begin{theorem}[\cite{marx-toc-treewidth}]\label{thm:subgraph}
  Assuming ETH, \partsub cannot be solved in time $f(k)n^{o(k/\log
    k)}$ (where $k=|V(H)|$) for any computable function $f$, even when $H$ is assumed to be 3-regular and bipartite.
\end{theorem}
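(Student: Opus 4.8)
The plan is to derive the bound from ETH through two reductions. First I would observe that \partsub with pattern $H$ is at least as hard as binary constraint satisfaction (equivalently, list graph homomorphism) whose primal graph is $H$: given a binary CSP with constraint graph $H$ on variables $u_1,\dots,u_k$ and a common domain of size $D$, build the host graph $G$ with one class $V_i$ per variable (one vertex per domain value) and, for each edge $u_iu_j$ of $H$, an edge between $a\in V_i$ and $b\in V_j$ exactly when $(a,b)$ is an allowed pair for that constraint. Since the classes are disjoint, a partitioned subgraph embedding of $H$ into $G$ is automatically injective and coincides with a satisfying assignment, so $|V(G)| = n = \Theta(kD)$ and it suffices to prove the lower bound for such CSPs. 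The pattern $H$ is fixed once and for all to be a \emph{$3$-regular bipartite expander} on $k$ vertices; such graphs exist, lie in the restricted class the theorem asks for, and have treewidth $w = \Theta(k)$, so that $w/\log w = \Theta(k/\log k)$.

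The heart of the argument is a combinatorial embedding theorem: every graph of treewidth $w$ admits a \emph{constant-congestion} embedding of a dense ``communication gadget'' (a clique or grid on $t = \Theta(w/\log w)$ nodes), that is, a map sending each gadget node to a connected subgraph of $H$ and each gadget edge to a path of $H$ so that every vertex and edge of $H$ is used $O(1)$ times. This is exactly what lets the fixed, sparse, bounded-degree graph $H$ simulate a much denser logical constraint structure on only $t$ \emph{super-variables}: each super-variable is realised by the $O(1)$ ordinary CSP-variables of $H$ in its image, and a constraint between two super-variables is enforced by propagating consistency along the $H$-path realising the corresponding gadget edge. Because the congestion is $O(1)$, each $H$-vertex carries only $O(1)$ coordinates, so the domain $D$ needed to describe a super-variable's value stays a fixed power of the underlying alphabet.

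I would then encode a $3$-\textsc{Sat} instance $\varphi$ with $N$ variables (which, assuming ETH, has no $2^{o(N)}$ algorithm) into the gadget by a \textsc{Grid Tiling}-style construction: each of the $t$ super-variables ranges over a domain of size $D$ and encodes the truth values of a block of $\Theta(\log D)$ of $\varphi$'s variables, while the dense adjacency of the gadget lets every clause of $\varphi$ be verified by a constraint between two super-variables whose images are adjacent. The number of Boolean variables encoded is $\Theta(t\log D) = \Theta\bigl(\tfrac{w}{\log w}\log D\bigr)$, and I would choose parameters so this equals $\Theta(N)$; concretely, $k=k(N)$ is taken to grow arbitrarily slowly while the domain $D$ (and hence $n = \Theta(kD)$, with $\log n = \Theta(\log D)$) grows to compensate.

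Putting the pieces together yields the contradiction. Suppose \partsub on $k$-vertex patterns could be solved in time $f(k)\,n^{g(k)}$ with $g(k) = o(k/\log k)$. Running the reduction and then this algorithm decides $\varphi$ in time $f(k)\,2^{g(k)\log n}$. Since $(k/\log k)\log n = \Theta\bigl(\tfrac{w}{\log w}\log D\bigr) = \Theta(N)$ and $g(k) = o(k/\log k)$, the exponent is $o(N)$; and because $k(N)$ grows slowly enough, the fixed computable factor $f(k)$ is also $2^{o(N)}$. This gives a $2^{o(N)}$ algorithm for $3$-\textsc{Sat}, contradicting ETH. The main obstacle, and the technical core of \cite{marx-toc-treewidth}, is establishing the constant-congestion embedding of a size-$\Theta(w/\log w)$ gadget into every treewidth-$w$ graph, together with the matching lower bound showing the $\log w$ loss is unavoidable for expanders; this is precisely what forces the $k/\log k$ exponent (rather than $k$) and is where the bounded-degree, expander structure of $H$ is exploited.
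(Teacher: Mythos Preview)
The paper does not prove Theorem~\ref{thm:subgraph}: it is quoted from~\cite{marx-toc-treewidth} and used as a black box in the reduction for Theorem~\ref{thm:hardness}. There is no proof in this paper to compare your proposal against.

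That said, your sketch is a reasonable high-level outline of the argument in~\cite{marx-toc-treewidth}: identify \partsub with binary CSP over the primal graph $H$, take $H$ from a family of $3$-regular bipartite expanders so that $\mathrm{tw}(H)=\Theta(k)$, invoke the embedding theorem of that paper to simulate a dense constraint structure on $\Theta(k/\log k)$ super-variables inside $H$ with bounded overlap, and encode an ETH-hard instance into that structure so that an $f(k)\,n^{o(k/\log k)}$ algorithm would contradict ETH. Two caveats are worth noting. First, the embedding result in~\cite{marx-toc-treewidth} is phrased as embedding an arbitrary sparse graph into any high-treewidth host with bounded \emph{depth}, and the parameter trade-off is more delicate than the phrase ``constant congestion'' suggests; getting the quantitative $k/\log k$ out requires the specific bounds proved there. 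Second, encoding \textsc{3-Sat} directly into a binary CSP on a clique as you describe runs into the problem that a clause may involve variables from three distinct blocks, so a single binary constraint cannot check it; the actual reduction routes through an intermediate problem (e.g.\ \textsc{3-Coloring} of a bounded-degree graph, whose Sparsification-Lemma form has each constraint local to one edge) rather than a naive block partition of \textsc{3-Sat}. Neither point is fatal to your plan, but both are where the real work in~\cite{marx-toc-treewidth} lies.
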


To prove Theorem~\ref{thm:hardness}, we need a reduction from \partsub (for 3-regular bipartite graphs) to \kcrouting{k}{c}, where the number $k$ of demands is linear in the number of vertics of $H$. 
\newcommand{\Pu}{\overline{Q}}
\newcommand{\Pd}{\underline{Q}}
\newcommand{\vu}{\overline{q}}
\newcommand{\vd}{\underline{q}}

% \begin{theorem}\label{thm:hardness}
% For any fixed integer $c\ge 1$, $k$-Congestion Routing is \textup{W[1]}-hard parameterized by $k$ and, assuming ETH, cannot be solved in time $f(k)n^{o(k/\log k)}$ for any computable function $f$.
% \end{theorem}
\begin{proof}[Proof (of Theorem~\ref{thm:hardness})]
  We prove the theorem by a reduction from \partsub. Let $H$ and $G$
  be two graphs, let $V(H)=\{u_1,\dots, u_{k}\}$, and let
  $(V_1,\dots, V_{k})$ be a partition of $V(G)$. By copying vertices
  if necessary, we may assume that every $V_i$ has the same size $n$;
  let us denote by $\{v_{i,1},\dots,v_{i,n}\}$ the vertices in
  $V_i$. By Theorem~\ref{thm:subgraph}, we may assume that $H$ is
  3-regular and bipartite. This means that $H$ has exactly $h=3k/2$
  edges and both partite classes contain $k/2$ vertices. Without loss
  of generality, we can assume that $U_1=\{u_1,\dots,u_{k/2}\}$ and
  $U_2=\{u_{k/2+1},\dots,u_k\}$ are the two partite classes. Let us
  fix an arbitrary ordering $e_1$, $\dots$, $e_{h}$ of the edges of
  $H$.

\textbf{Construction.}  We construct an instance of $(k,c)$-Congestion Routing the following
  way.  We construct a directed graph $D$ that contains, for every
  $1\le i \le k$, two directed paths $\Pu_i$ and $\Pd_i$ (see Figure~\ref{fig:hard}).  Path
  $\Pu_i$ has $n(h+1)+1$ vertices: it contains the vertices $\vu_{i,0}$,
  $\dots$, $\vu_{i,n}$ in this order and additionally, for every
  $1\le j \le n$, the vertices $\vu_{i,j,1}$, $\dots$, $\vu_{i,j,h}$
  are inserted between $\vu_{i,j-1}$ and $\vu_{i,j}$. The path $\Pd_i$
  is defined the same way, with vertices $\vd$ instead of $\vu$. For
  every $1\le \ell \le h$, we introduce two vertices $s_\ell$ and
  $t_\ell$. Then we complete the construction of the graph $D$ by
  introducing further edges as follows.
\begin{figure}[t]
\begin{center}
{\small \svg{\linewidth}{hard}}
\end{center}
\caption{Part of the directed graph $D$ constructed in the proof of Theorem~\ref{thm:hardness} with $k=4$, $h=6$, and $n=5$. For clarity, we consider only one edge $e_4$ of $H$, which connects $u_1$ and $u_3$, and assume that the only edge between $V_1$ and $V_3$ is between $v_{1,3}$ and $v_{3,5}$. The highlighted red paths show the paths $P^v_1$, $P^v_3$, and $P^e_4$ of the solution.}\label{fig:hard}
\end{figure}

\begin{itemize}
\item  For every $1\le i \le k$ and $1\le j \le n$, we introduce the edge $(\vu_{i,j-1},\vd_{i,j})$ (the curved bypass edges in Figure~\ref{fig:hard}).
\item For every $1\le i \le k$, $1\le j \le n$, and $1\le s \le h$, we introduce the edge $(\vu_{i,j,s},\vd_{i,j,s})$ (the vertical edges in Figure~\ref{fig:hard}).
\item For every $1\le \ell \le h$, we do the following. Suppose that edge
  $e_\ell$ of $H$ connects $u_{i_a}$ and $u_{i_b}$ for some $1\le i_a \le k/2$ and
  $k/2+1\le i_b \le k$. Then for every pair of vertices $v_{i_a,j_a}\in V_{i_a}$ and $v_{i_b,j_b}\in V_{i_b}$ that are adjacent in $G$, we introduce the following three edges into $D$: $(s_\ell,\vu_{i_a,j_a,\ell})$, $(\vd_{i_a,j_a,\ell},\vu_{i_b,j_b,\ell})$, and $(\vd_{i_b,j_b,\ell},t_\ell)$.
\end{itemize}
To complete the construction of the $(k,c)$-Congestion Routing instance, we define the following set of $k+2k(c-1)+h$ demands:
\begin{itemize}
\item For every $1\le i \le k$, we introduce the demand $(\vu_{i,0},\vd_{i,n})$ (vertex demands).
\item For every $1\le i \le k$, we introduce $c-1$ copies of the demand $(\vu_{i,0},\vu_{i,n})$ (blocking demands).
\item For every $1\le i \le k$, we introduce $c-1$ copies of the demand $(\vd_{i,0},\vd_{i,n})$ (blocking demands).
\item For every $1 \le \ell \le h$, we introduce the demand $(s_\ell,t_\ell)$ (edge demands). 
\end{itemize}
Note that, for every fixed $c\ge 1$, the number of demands is $O(k)$. In the rest of the proof, we show that a routing with congestion $c$ exists if and only if the \partsub instance has a solution. Then the \textup{W[1]}-hardness and lower bound stated in Theorem~\ref{thm:subgraph} implies the same hardness results for the routing problem.

\textbf{Subgraph embedding $\Rightarrow$ routing.}
Suppose first that vertices $v_{1,z_1}\in V_1$, $\dots$, $v_{k,z_k}\in V_k$ form a solution to the \partsub instance. We construct a routing that contains the following paths, satisfying the demands defined above:

\begin{itemize}
\item For every $1\le i \le k$, the vertex demand $(\vu_{i,0},\vd_{i,n})$ is satisfied by 
a path $P^v_i$ that goes from $\vu_{i,0}$ to $\vu_{i,z_i-1}$ on $\Pu_i$, uses the edge $(\vu_{i,z_i-1},\vd_{i,z_i})$, and the goes from $\vd_{i,z_i}$ to $\vd_{i,n}$ on $\Pd_i$.
\item For every $1\le i \le k$, each of the $c-1$ copies of the blocking demand $(\vu_{i,0},\vu_{i,n})$ is satisfied by a path going on $\Pu_i$.
\item For every $1\le i \le k$, each of the $c-1$ copies of the blocking demand $(\vd_{i,0},\vd_{i,n})$ is satisfied by a path going on $\Pd_i$.
\item For every $1 \le \ell \le h$, the edge demand $(s_\ell,t_\ell)$ is satisfied by a 5-edge path $P^e_\ell=(s_\ell,\vu_{i_a,z_{i_a},\ell},$ $\vd_{i_a,z_{i_b},\ell},\vu_{i_b,z_{i_b},\ell},\vd_{i_b,z_{i_b},\ell},t_\ell)$.
\end{itemize}
It is easy to verify that these are indeed paths: all the required
edges exist. We claim that each vertex of $D$ is used by at most $c$
of these paths. It is easy to see that two paths $P^v_{i'}$ and $P^v_{i''}$ with $i\neq i''$ satisfying vertex
demands do intersect, and this is also true for any two paths $P^e_{\ell'}$ and $P^e_{\ell''}$ with $\ell'\neq \ell''$
satisfying edge demands (note that each vertex of the path $P^e_\ell$  has $\ell$ in its index). The crucial
observation is that the path $P^v_i$ does intersect the path $P^e_\ell$ for any $\ell$. The only way this could possibly happen is
if edge $e_\ell$ of $H$ connects $u_{i_a}$ with $u_{i_b}$, and $i$ is
equal to $i_a$ or $i_b$. But the path $P^e_\ell$
uses only vertex $\vu_{i_a,z_{i_a},\ell}$ from $\Pu_{i_a}$ and vertex
$\vd_{i_a,z_{i_b},\ell}$ from $\Pd_{i_b}$, while the path $P^v_i$  does not use these vertices, as it jumps from
$\vu_{i,z_i-1}$ to $\vd_{i,z_i}$. Thus each vertex is used by at most $c-1$ paths satisfying a blocking demand and at most one additional path satisfying a vertex or edge demand. We can conclude that each vertex is used by at most $c$ of the paths, what we had to show.

\textbf{Routing $\Rightarrow$ subgraph embedding.} Next we show that given a routing with congestion $c$, it is possible to construct the required subgraph embedding from $H$ to $G$. It is clear that the path satisfying the blocking demand $(\vu_{i,0},\vu_{i,n})$ is exactly $\Pu_i$: after leaving $\Pu_i$, there is no way to return back to it. Similarly, the solution must use path $\Pd_i$ to satisfying the blocking demand $(\vd_{i,0},\vd_{i,n})$. It is also clear that the path $P^v_i$ satisfying the vertex demand $(\vu_{i,0},\vd_{i,n})$ has to be contained in the union of $\Pu_i$ and $\Pd_i$. Let $1\le z_i \le n$ be the smallest value such that $\vd_{i,z_i}$ is on path $P^v_i$ (note that this value is positive, as vertex $\vd_{i,0}$ cannot be reached from $\vu_{i,0}$). Observe that path $P^v_i$ uses every vertex of $\Pd_i$ from $\vd_{i,z_i}$ to $\vd_{i,n}$ (as it cannot leave $\Pd_i$). Moreover, since $P^v_i$ does not use the part of $\Pd_i$ from $\vd_{i,0}$ to $\vd_{i,z_i-1}$ by definition, it has to use the part of $\Pu_i$ from $\vu_{i,0}$ to $\vu_{i,z_i-1}$.

  We claim that mapping vertex $u_i$ of $H$ to vertex $v_{i,z_i}$ of
  $G$ is a correct subgraph embedding of $H$ into $G$. To show this,
  suppose that edge $e_i$ of $H$ connects $u_{i_a}$ and $u_{i_b}$
  with $1\le i_a \le k/2$ and $k/2+1\le i_b\le k$; we need to show
  that $v_{i_a,z_{i_a}}\in V_{i_a}$ and $v_{i_b,z_{i_b}}\in V_{i_b}$ are adjacent. Consider
  the path $P^e_\ell$ satisfying edge demand $(s_\ell,t_\ell)$.  By
  construction, the vertex of $P^e_\ell$ after $s_\ell$ has to be on
  the path $\Pu_{i_a}$ and the vertex of $P^e_\ell$ before $t_\ell$
  has to be on $\Pd_{i_b}$. The only way to go from $\Pu_{i_a}$ to
  $\Pd_{i_b}$ is to use an edge of the form
  $(\vd_{i_a,j_a,\ell},\vu_{i_b,j_b,\ell})$: the only way we can leave
  the union of $\Pu_{i_a}$ and $\Pd_{i_a}$ is to enter some $\Pu_i$
  with $k/2+1 \le i \le k$, and there is no edge connecting
  $\Pu_{i_b}$ or $\Pd_{i_b}$ with any $\Pu_i$ with $k/2+1 \le i \le k$ and $i\neq i_b$
  (this is the part of the proof where we use that $H$ is bipartite). We claim that $j_a=z_{i_a}$. 
 If $j>z_{i_a}$, then $\vd_{i_a,j_a,\ell}$ is also used by
  the $c-1$ paths satisfying the blocking demand
  $(\vd_{i_a,0},\vd_{i_a,n})$ and (as we have seen) the path
  $P^v_{i_a}$, contradicting the assumption that the routing has
  congestion $c$. If $j<z_{i_a}$, then there is no way for the path
  $P^e_\ell$ to reach $\vd_{i_a,j_a,\ell}$ from $s_\ell$: each vertex of the path $\Pu_{i_a}$ from $\vu_{i_a,0}$ to $\vu_{i_a,j_a}$
  is used by $c-1$ paths satisfying the blocking demand
  $(\vd_{i_a,0},\vd_{i_a,n})$ and (as shown above) by the path
  $P^v_{i_a}$. This shows $j_z=z_{i_a}$ and a similar argument shows $j_b=z_{i_b}$. Now the existence of the edge
 $(\vd_{i_a,z_a,\ell},\vu_{i_b,z_b,\ell})$ means, by construction, that $G$ contains an edge between $v_{i_a,z_a}\in V_{i_a}$ and $v_{i_b,z_b}\in V_{i_b}$, what we had to show.
\end{proof}
\section{Conclusion}
In this paper we have studied the $\kcrouting{k}{c}$ problem on
acyclic digraphs. It is easy to see that the $n^{O(k)}$ algorithm in \cite{FortuneHW80} for solving the
disjoint paths problem on acyclic digraphs can
be extended to an $n^{O(k)}$ algorithm for $\kcrouting{k}{c}$. As we
proved in Theorem~\ref{thm:hardness}, the $n^{O(k)}$ time
algorithm is essentially best possible with respect to the exponent of
$n$, under the Exponential-Time Hypothesis (ETH). We therefore studied
the extreme cases of relatively high congestion $k-d$ for some fixed
value of $d$. In Theorem~\ref{thm:main-algo} we showed that in this
case we can obtain an $n^{O(d)}$ algorithm on acyclic digraphs,
i.e.~the algorithm only depends on the offset $d$ in
$\kcrouting{k}{k-d}$ but not on the number $k$ of demand pairs. The
proof relied on a reduction argument that shows that as long as $k$ is big
enough compared to $d$, then a demand pair can be eliminated without
changing the answer.

It will be interesting to see whether our result can be extended to
larger classes of digraphs. In particular classes of digraphs of
bounded directed tree width would be a natural target. On such
classes, the $k$-disjoint paths problem can be solved in time
$n^{O(k + w)}$, where $w$ is the directed tree width of the input
digraph (see \cite{JohnsonRST01}). It is conceivable that our results
extend to bounded directed tree width classes and we leave this for
future research.

\bibliographystyle{plainurl}
\bibliography{references}

\end{document}